\documentclass[sigconf]{acmart}
\AtBeginDocument{%
  }

\usepackage[utf8]{inputenc} 
\usepackage[T1]{fontenc}    
\usepackage[table]{xcolor}
\usepackage{hyperref}       
\usepackage{url}            
\usepackage{booktabs}       
\usepackage{amsfonts}       
\usepackage{nicefrac}       
\usepackage{microtype}      
\usepackage{xcolor}         
\usepackage{multirow}
\usepackage{microtype}
\usepackage{graphicx}
\usepackage{booktabs} 
\usepackage{makecell}
\usepackage{array}
\usepackage{float}
\usepackage{algorithm}
\usepackage{algpseudocode}
\usepackage{amsmath}
\usepackage{mathtools}
\usepackage{amsthm}
\usepackage{minted}
\usepackage{listings}
\usepackage{enumitem}
\usepackage{subfig}
\usepackage{shadowtext}
\usepackage{balance}

\copyrightyear{2026}
\acmYear{2026}
\setcopyright{cc}
\setcctype{by}
\acmConference[KDD '26]{Proceedings of the 32nd ACM SIGKDD Conference on Knowledge Discovery and Data Mining V.1}{August 09--13, 2026}{Jeju Island, Republic of Korea}
\acmBooktitle{Proceedings of the 32nd ACM SIGKDD Conference on Knowledge Discovery and Data Mining V.1 (KDD '26), August 09--13, 2026, Jeju Island, Republic of Korea}
\acmPrice{}
\acmDOI{10.1145/3770854.3780296}
\acmISBN{979-8-4007-2258-5/2026/08}

\begin{document}

\title[AbFlow]{AbFlow : End-to-end Paratope-Centric Antibody Design by Interaction Enhanced Flow Matching}

\author{Wenda Wang}
\email{wangwenda87@ruc.edu.cn}
\orcid{0000-0003-2469-8522}
\affiliation{%
  \institution{Renmin University of China, Beijing, China}
  \country{}
}

\author{Yang Zhang}
\email{fengyuewuya@ruc.edu.cn}
\orcid{0009-0005-0208-2964}
\affiliation{%
  \institution{China National Institute of Standardization, Beijing, China}
  \country{}
}
\affiliation{%
  \institution{Renmin University of China, Beijing, China}
  \country{}
}

\author{Zhewei Wei}
\email{zhewei@ruc.edu.cn}
\orcid{0000-0003-3620-5086}
\affiliation{%
  \institution{Renmin University of China, Beijing, China}
  \country{}
  \thanks{Zhewei Wei is the corresponding author. The work was partially done at Gaoling School of Artificial Intelligence, Beijing Key Laboratory of Research on Large Models and Intelligent Governance, Engineering Research Center of Next-Generation Intelligent Search and Recommendation, MOE, and Pazhou Laboratory (Huangpu), Guangzhou, Guangdong 510555, China.}
}

\author{Wenbing Huang}
\email{hwenbing@126.com}
\orcid{0000-0002-2566-4159}
\affiliation{%
  \institution{Renmin University of China, Beijing, China}
  \country{}
}

\renewcommand{\shortauthors}{Wenda Wang, Yang Zhang, Zhewei Wei and Wenbing Huang}

\begin{abstract}
Antigen-antibody binding is a critical process in the immune response. Although recent progress has advanced antibody design, current methods lack a generative framework for end-to-end modeling of full-atom antibody structures and struggle to fully exploit antigen-specific geometric information for optimizing local binding interfaces and global structures. To overcome these limitations, we introduce AbFlow, a paratope-restricted one-step flow-matching framework for designing full-atom antibodies end-to-end. AbFlow incorporates an extended velocity field network featuring an equivariant Surface Multi-channel Encoder, which uses surface-level antigen interaction data to refine the antibody structure, particularly the CDR-H3 region. Extensive experiments in paratope-centric antibody design, multi-CDRs and full-atom antibody design, binding affinity optimization, and complex structure prediction show that AbFlow produces superior antigen-antibody complexes, especially at the contact interface, and markedly improves the binding affinity of generated antibodies.
\end{abstract}

\begin{CCSXML}
<ccs2012>
   <concept>
       <concept_id>10010405.10010444.10010087.10010098</concept_id>
       <concept_desc>Applied computing~Molecular structural biology</concept_desc>
       <concept_significance>500</concept_significance>
       </concept>
   <concept>
       <concept_id>10010405.10010444.10010450</concept_id>
       <concept_desc>Applied computing~Bioinformatics</concept_desc>
       <concept_significance>500</concept_significance>
       </concept>
   <concept>
       <concept_id>10010147.10010257.10010293.10010294</concept_id>
       <concept_desc>Computing methodologies~Neural networks</concept_desc>
       <concept_significance>300</concept_significance>
       </concept>
 </ccs2012>
\end{CCSXML}

\ccsdesc[500]{Applied computing~Molecular structural biology}
\ccsdesc[500]{Applied computing~Bioinformatics}
\ccsdesc[300]{Computing methodologies~Neural networks}


\keywords{Antibody Design; Flow Matching; Equivariant Graph Neural Network; Surface Modeling}


\maketitle
\newcommand\kddavailabilityurl{https://doi.org/10.5281/zenodo.18085636}
\ifdefempty{\kddavailabilityurl}{}{
\begingroup\small\noindent\raggedright\textbf{Resource Availability:}\\
Our code is also available at \url{https://github.com/WangWenda87/AbFlow.git} for direct access and updates.
\endgroup
}

\section{Introduction}
Antibodies can specifically recognize and bind to antigens, thus playing multiple immune defense functions, such as neutralizing the toxicity of pathogens and marking pathogens for clearance by immune cells \citep{buchwalow2009antibodies}. The design of antibodies for specific antigens is significant in biology and immunology \citep{almagro2018progress, douglass2021bispecific, swartz2012engineering}. However, due to the key binding role and high variability of the complementarity-determining regions (CDRs) of antibodies, this task is often challenging \citep{wang2024dna, fu2022antibody, wang2023mesoscale}.

Antibody design involves the joint generation of an antibody's complete sequence and full-atom structure. Traditional energy-based optimization \citep{choi2015antibody, li2014optmaven} was initially applied. Later, 1D sequence learning-based language models \citep{saka2021antibody, liu2020antibody} emerged, characterized by their ability to model antibody sequences linearly, focusing on capturing patterns and relationships within amino acid sequences without considering the 3D structural context.

\begin{figure*}[ht]
\begin{center}
\centerline{\includegraphics[width=1.95\columnwidth]{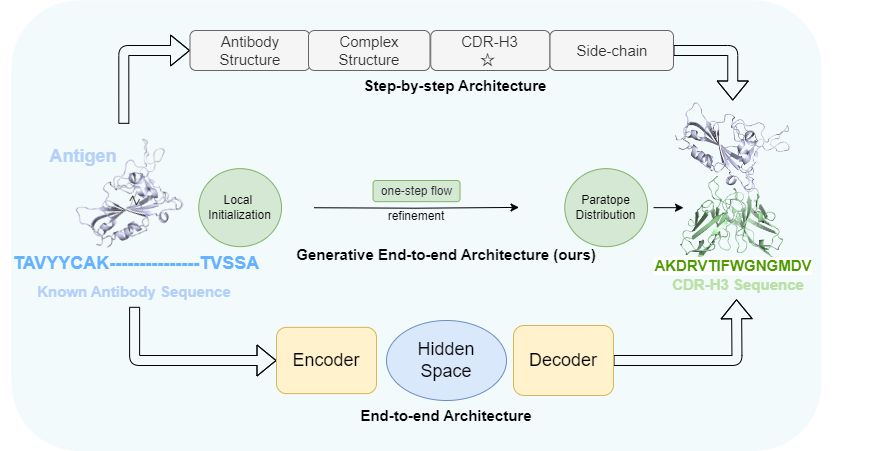}}
\caption{\textbf{Overall illustration of antibody design algorithm.} Given antibody sequences lacking CDR-H3, generate complete sequences and 3D structures. Unlike existing methods that mostly focus on a single step (e.g., CDR-H3 generation) in antibody design tasks and a few end-to-end frameworks, we propose a novel generative end-to-end architecture.}
\Description{Overall illustration of antibody design algorithm.}
\label{fig1}
\end{center}
\end{figure*}

Recent antibody design approaches can be broadly classified into step-by-step pipelines and end-to-end generative methods.
Step-by-step methods decompose the antibody design task into sequential stages, with models tailored to specific sub-problems—typically focusing on CDR design, particularly CDR-H3. These methods generally follow a fixed pipeline: starting from predicting the antibody structure from its sequence \citep{ruffolo2023fast}, docking the antibody and antigen structures \citep{pierce2011accelerating, yan2020hdock}, then generating or predicting CDR-H3 sequences and structures \citep{kong2022conditional, jin2022antibody, luo2022antigen, zhu2024antibody}, and finally performing side-chain packing based on residue identity and geometry \citep{alford2017rosetta}. Among these, MEAN \citep{kong2022conditional} and HERN \citep{jin2022antibody} adopt GNN-based predictive frameworks, while DiffAb \citep{luo2022antigen} and AbX \citep{zhu2024antibody} utilize diffusion-based generative modeling for CDR synthesis.

In contrast, end-to-end methods \citep{kong2023end, wang2025iggm} aim to model the antibody design process holistically. Given a specific antigen and a partial antibody sequence lacking the CDR-H3, these methods directly generate the complete antibody sequence and full-atom 3D structure in a single, unified framework. For example, dyMEAN \citep{kong2023end} performs full-atom generation through modules including antibody initialization, multi-channel equivariant encoding, and joint sequence-structure prediction. Recent methods further realize end-to-end design via flow matching: dyAb \citep{tan2025dyab} generates CDR sequences and structures during a fine-grained optimization stage before grafting them onto a coarse-grained template, while IgFlow \citep{nagaraj2024igflow} performs joint discrete–continuous flow matching on the entire antibody to produce coupled sequence–structure designs.

Despite significant progress in antibody design, two critical challenges persist. \textbf{First, current generative frameworks insufficiently integrate sequence-structure generation with global geometric message passing.} Existing models either perform local (e.g., CDR or paratope) antibody structure generation, but generally fail to model the structural information flow between the paratope and the rest of the antibody or to account for holistic structural refinement, resulting in weak coupling between generative dynamics and structural information propagation \citep{luo2022antigen, zhu2024antibody}. \textbf{This decoupling limits their ability to produce globally coherent full-atom antibodies.}

\textbf{Second, the inability to leverage fine-grained antigen geometry for precise interface modeling.} Current models struggle to utilize detailed geometric features of the antigen surface, limiting their ability to accurately model antigen-antibody interfaces \citep{10.1093/bib/bbae048}. This shortcoming impairs structural signal propagation across the antibody framework, leading to suboptimal interface complementarity and reduced global structural coherence in generated antibodies.

Based on the aforementioned research progress and challenges, we propose \textbf{AbFlow}, a novel one-step flow-based generative framework for end-to-end full-atom antibody design. Our main contributions are as follows:
\setlength{\parskip}{0.05em}
\begin{itemize}\setlength{\itemsep}{0pt}\setlength{\parskip}{0pt}
\item[$\bullet$] Our AbFlow employs a \textbf{paratope-restricted one-step flow matching module} for high-fidelity generation. The framework incorporates an \textbf{EGNN-based joint generative and message-passing velocity field} that performs one-step paratope generation and propagates structural information from the generated paratope to the full antibody. This dual functionality enables coherent full-atom refinement of the entire antibody despite applying flow only to a local region.
\item[$\bullet$] To address the lack of antigen-specific geometric context in existing methods, we propose an \textbf{equivariant Surface Multi-channel Encoder (SME)} that extracts fine-grained structural cues from the antigen surface. This geometric information is subsequently injected into the one-step velocity field network to effectively guide paratope generation in an antigen-aware and interaction-consistent manner.
\item[$\bullet$] Extensive experiments on \textbf{paratope–centric antibody design}, \textbf{multi-CDRs and full-atom antibody design}, \textbf{binding affinity optimization}, and \textbf{complex structure prediction} demonstrate that our AbFlow outperforms existing step-by-step and end-to-end baselines in structural accuracy, interface quality, and binding affinity.
\end{itemize}
See Figure~\ref{fig1} for a schematic comparison of different antibody design paradigms, including step-by-step pipelines, conventional end-to-end frameworks, and our generative approach.
\section{Related Work}
\textbf{Antibody design.} Antibody aims to develop antibodies that can bind to specific antigens with high efficiency. Based on the diffusion probabilistic model, DiffAb \citep{luo2022antigen} iteratively updates the amino acid type, position, and orientation of CDRs to achieve tasks such as sequence-structure co-design. HERN \citep{jin2022antibody} employs a hierarchical equivariant refinement network. The paratope docking task predicts atomic forces through a hierarchical message passing network. MEAN \citep{kong2022conditional} models antibody design as a 3D-equivariant graph translation problem, with the multi-channel equivariant attention network and the progressive full-shot decoding strategy. dyMEAN \citep{kong2023end}, an end-to-end full-atom model, through an adaptive multi-channel encoder, iteratively updates the antibody sequence and structure. Our work further develops this end-to-end method, enhancing the quality of the interaction between the generated antibody and the antigen, as well as their binding affinity.

\textbf{Generative model.} In recent years, generative models, including diffusion-based and flow-based models, have been increasingly used in molecular and protein generation tasks. AlphaFold 3 \citep{abramson2024accurate} replaces the structure module of AlphaFold 2 \citep{jumper2021highly} with a diffusion module for generating 3D structures. DiffAb \citep{luo2022antigen} and AbX \citep{zhu2024antibody} are based on DDPM \citep{ho2020denoising} and score-based models \citep{song2021maximum}, respectively, for antibody CDR design. Flow matching \citep{lipman2022flow, liu2022flowstraightfastlearning, albergo2023buildingnormalizingflowsstochastic} simulates the transport from a simple distribution to the target (data) distribution by learning Continuous Normalizing Flows (CNFs). Flow matching has recently gained widespread use in molecular generation tasks. ETFlow \citep{hassan2024etflowequivariantflowmatchingmolecular} implements molecular 3D conformer optimization based on flow matching, while GOAT \citep{hong2025accelerating3dmoleculegeneration} combines flow and optimal transport theory for 3D molecular generation. In our work, we apply the local one-step flow matching framework to end-to-end antibody design tasks.

\textbf{Surface representation.} The structure surface contains the geometric and chemical information that reflects the interaction patterns and relationship with other molecules. MaSIF \citep{gainza2020deciphering, sverrisson2021fast} decomposes the surface into overlapping radial patches, extracts features, and embeds them into vector descriptors, applying them to various tasks with the help of geometric deep learning. EquiPocket \citep{zhang2023equipocket} treats proteins as graph structures, extracting atomic geometric information and predicting ligand binding sites using an E(3)-equivariant graph neural network for modeling and message passing. SurfPro \citep{song2024surfpro} models surface features with a hierarchical encoder and generates amino acid sequences with an autoregressive decoder. We introduce surface representation into the antibody design task to convey more fine-grained interaction information between the antigen and the antibody.
\section{Notations and Problem Setting}
\label{sec:pro setting}
A general antibody has a characteristic Y-shaped structure composed of two heavy chains and two light chains held together by disulfide bonds. It has constant domains that remain unchanged among different antibodies, and variable regions at the tips of the Y that specifically recognize and bind to antigens. 

Our work lies in the variable regions $V=\{V_h, V_l\}$, where $V_h$ and $V_l$ represent the variable regions in the heavy and light chains. A variable region can be divided into four framework regions (FRs) and three complementarity-determining regions (CDRs). In work related to antibody design, CDR design refers to the design of the coordinates and sequence of the third complementarity-determining region (CDR-H3) of the heavy chain. Since the sequence of this part of the antibody is highly variable and often unknown, it dominates the binding of antigen and antibody \citep{maccallum1996antibody}. In this paper, we refer to CDR-H3 as the \textbf{paratope} according to \citep{jin2022antibody}. Additionally, the part of an antigen that binds to the antibody is called the \textbf{epitope}.

Following \citep{kong2022conditional} and \citep{kong2023end}, we use the graph format $\mathcal{G} = (\mathcal{V},\mathcal{E})$ to represent the epitope and paratope used in this work. Here, node set V consists of all residues in the structure, where $V = \{v_i\}_{i = 1}^{|\mathcal{V}|}$ and $v_i = (a_i, X_i)$. In this, $a_i \in \mathbb{R}$ represents the type of the i-th amino acid, and $X_i \in \mathbb{R}^{3\times c_i}$ represents the full-atom 3D coordinates of the residue, where $c_i$ indicates the number of atoms in this residue. As a full-atom design method, we need to represent the different numbers of atoms in different residues.
The edge set $\mathcal{E}$ is calculated based on the distances between nodes in $\mathcal{G}$. We calculate the pair-wise distance between two residues $v_i$ and $v_j$ as Equaion \eqref{kNN} :
\begin{equation}
    d\left(v_i, v_j\right)=\min _{1 \leq m \leq c_i, 1 \leq n \leq c_j}\left\|\boldsymbol{X}_i(:, m)-\boldsymbol{X}_j(:, n)\right\|_2.
    \label{kNN}
\end{equation}

Here, $X_i(:, m)$ and $X_j(:, n)$, respectively, represent the m-th atom of residue i and the n-th atom of residue j. Then, we calculate the k-nearest neighborhood (KNN) of each residue. In this way, we can define epitope $\mathcal{G}_E=(\mathcal{V}_E, \mathcal{E}_E)$, antibody $\mathcal{G}_A=(\mathcal{V}_A, \mathcal{E}_A)$ and paratope $\mathcal{G}_P=(\mathcal{V}_P, \mathcal{E}_P)$.

\textbf{Problem Setting:} The main task of our work is to design the full-atom antibody 3D structures as well as the paratope sequences under the premise that only the epitope is known and the sequence of the antibody without the CDR-H3 part is available. Expressed in symbols: Given the epitope $\mathcal{G}_E=(\mathcal{V}_E, \mathcal{E}_E)$ and the antibody sequence with missed paratope $\{a_i|i\in \mathcal{V_A}, i \notin \mathcal{V_P}\}$, design paratope sequences $\{a_i|i\in \mathcal V_P\}$ and full-atom antibody 3D coordinates $\{X_i|i\in \mathcal V_A\}$.

Beyond this basic problem setting, AbFlow can be naturally extended to a broader range of antibody modeling tasks. In particular, we demonstrate the scalability of our framework through applications to multi-CDRs design and antigen-antibody complex structure prediction, highlighting its flexibility and generality across different antibody modeling scenarios.
\section{Method}
\label{sec:method}
Figure \ref{fig2} presents an overview of AbFlow. Next, we provide a detailed introduction to each module in the architecture. Section \ref{sec:flow} introduces the one-step flow matching formulation for transporting the initialized paratope state toward the target endpoint. Section \ref{sec:surf} introduces the definition and sampling of the epitope surface. Section \ref{sec:SME} presents the extended one-step flow network, including the Surface Multi-Channel Encoder (SME) used to enhance the interaction between epitope and paratope information, as well as the network that propagates structural information from the paratope to the entire antibody. Section~\ref{sec:train} presents the training objective and training algorithm.
\begin{figure*}[!ht]
\begin{center}
\centerline{\includegraphics[width=1.9\columnwidth]{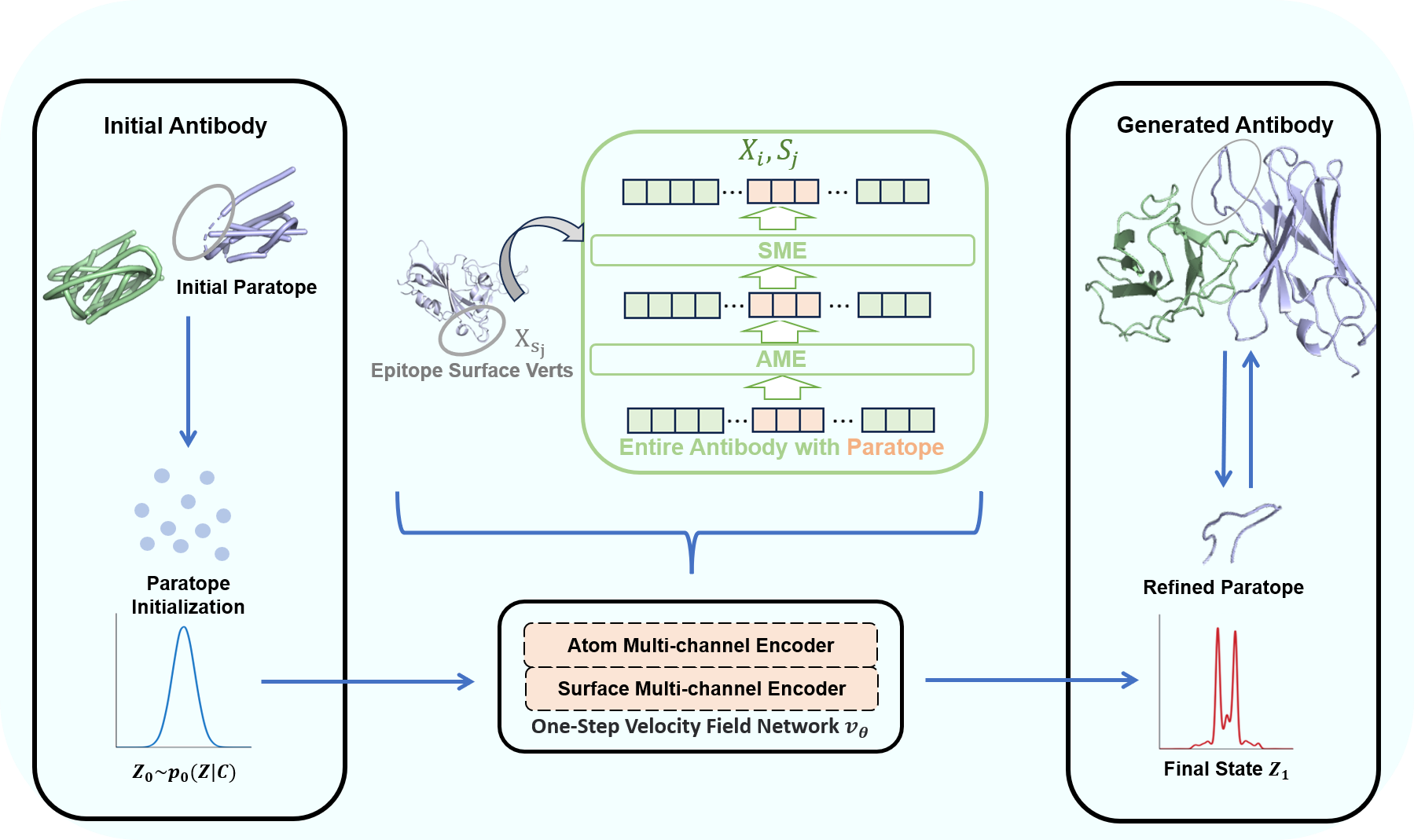}}
\caption{\textbf{Overall architecture of AbFlow.} Input the antigen epitope and the antibody sequence lacking CDR-H3 as introduced in Section \ref{sec:pro setting}. We initialize the local paratope state for the transport of probability distribution. In the single-step forward process, we introduce an extended velocity field network that not only predicts the one-step paratope endpoint but also enhances specific-antigen binding interface information through a Surface Multi-channel Encoder (SME), propagating this information from the paratope to the entire antibody structure.}
\Description{Overall architecture of AbFlow.}
\label{fig2}
\end{center}
\end{figure*}

\subsection{One-Step Flow Matching for the Paratope}
\label{sec:flow}

Flow matching \citep{lipman2022flow} learns a transport process that maps an initial distribution to a target data distribution. Given an initial structural state $X_0 \sim p_0$ and a target structure $X_1 \sim p_1$, standard flow matching models this transport through a time-dependent velocity field $u_{\phi}(X_t,t)$ governed by
\begin{equation}
\frac{\mathrm{d}X_t}{\mathrm{d}t}=u_{\phi}(X_t,t), \qquad X_0\sim p_0.
\label{eq:fm_ode}
\end{equation}

A commonly used conditional probability path is the linear interpolation between the initial and target states. With $\sigma_{\min}=0$, this path and its corresponding target velocity are
\begin{equation}
X_t=(1-t)X_0+tX_1, \qquad u^{*}(X_t,t)=X_1-X_0.
\label{eq:fm_path}
\end{equation}
Standard flow matching trains the velocity field at randomly sampled intermediate states:
\begin{equation}
\mathcal{L}_{\mathrm{FM}}=\mathbb{E}_{t,X_0,X_1}\left[\left\|u_{\phi}(X_t,t)-u^{*}(X_t,t)\right\|_2^2\right], \qquad t\sim\mathcal{U}(0,1).
\label{eq:standard_fm}
\end{equation}

AbFlow adopts a paratope-restricted one-step formulation of this transport process, designed specifically for local interface generation. Let $C$ denote the available antigen-specific conditional information, primarily including the antigen structure and epitope surface geometry, together with the known partial sequence and structural context of the antibody. The initial local interface state is sampled from a conditional initialization distribution:
\begin{equation}
Z_0=(S_0,X_0)\sim p_0(Z\mid C),
\label{eq:conditional_initialization}
\end{equation}
where $S_0$ and $X_0$ denote the initial paratope sequence and coordinates, respectively.

Instead of sampling intermediate flow times, the main model consistently evaluates the transport at the initial boundary:
\begin{equation}
t\sim\delta_0, \qquad X_t=X_0, \qquad u^{*}(X_0,0)=X_1-X_0,
\label{eq:one_step_flow}
\end{equation}
where $\delta_0$ denotes the point-mass distribution at $t=0$. The network therefore learns the complete one-step transport from the initialized local interface to the target paratope structure without requiring explicit continuous-time conditioning.

Let $v_{\theta}=(v_{\theta}^{s},v_{\theta}^{x})$ denote the joint generative and message-passing network. Given the initialized local state and the conditional information, its structural branch $v_{\theta}^{x}$ predicts the refined endpoint coordinates, while its sequence branch $v_{\theta}^{s}$ jointly updates the sequence representation and produces residue-type logits, whose categorical decoding yields the sequence endpoint. The joint prediction is written as
\begin{equation}
(\hat{S}_1,\hat{X}_1)=v_{\theta}(Z_0,C).
\label{eq:one_step_prediction}
\end{equation}
The associated structural displacement $\hat{X}_1-X_0$ represents the predicted one-step transport from the initial state to the target endpoint. Because the transport is evaluated only at the initial boundary, directly predicting $\hat{X}_1$ implicitly determines the complete one-step displacement. Therefore, the adopted one-step flow formulation can also be interpreted as an endpoint mapping from the initialized local state to the refined paratope.

The one-step formulation is applied only to the local paratope region, while the resulting local information is propagated through the entire antibody using the message-passing network described in Section~\ref{sec:SME}. We additionally compare this formulation with standard flow matching trained on randomly sampled intermediate states in Section~\ref{abl}.

\subsection{Surface Definition and Sampling}
\label{sec:surf}
In this section, we describe the procedure for constructing surface vertices around the epitope and sampling representative points for interaction with the paratope. Since the paratope inherently forms a flexible and well-defined surface on the antibody, we focus exclusively on modeling the surface of the epitope to enhance the message passing between the antigen and antibody.

Given the epitope residues $\{X_i \mid i=1, 2, \dots, N=|\mathcal{V}_E|\}$, we utilize MSMS \citep{sanner1996reduced} to generate a dense set of surface vertices $\mathbb{S} = \{s_1, s_2, \dots, s_M\}$ within a radius of 1.5 Å centered on the epitope region, where $M \gg N$. For each vertex $s_i \in \mathbb{S}$, we assign an \emph{attribution residue} by identifying its nearest epitope residue according to the following rule:
\[
v_j = \arg\min_{1 \leq k \leq N} \|s_i - X_k\|_2.
\]
This association partitions the vertex set $\mathbb{S}$ into $N$ disjoint subsets, each corresponding to an epitope residue:
\begin{align}
\mathbb{S}_{v_p} &= \{ s_q \}_{q=1}^{m_p}, \quad p=1,\ldots,N, \\
\sum_{p=1}^{N} m_p &= M, \quad 0 \le m_p \ll M.
\end{align}
To reduce computational cost and maintain robustness across residues, we randomly sample at most $M_0$ surface vertices for each epitope residue when constructing each batch. These sampled vertices are then used to interact with the paratope, injecting local geometric and chemical information from the antigen surface into the antibody representation. Notably, we avoid using all available vertices or a multi-channel aggregation strategy, as the number of vertices associated with each epitope residue varies significantly and can become prohibitively large in practice.

\subsection{SME and One-Step Velocity Field Network}
\label{sec:SME}
Next, to address the challenge that specific antigen structure information is underutilized in guiding antibody design, we propose a Surface Multi-channel Encoder (SME) to propagate finer-grained, surface-level epitope information to paratope structures. SME can uniformly handle residues with different channel sizes on the paratope as well as surface vertices on the epitope surface, whose size is far larger than the channel size of residues. It can also effectively capture the geometric context between surface vertices and residue atoms, enabling fine-grained message construction. By aggregating these geometry-aware messages, the SME module \textbf{facilitates directional information from the epitope surface to the antibody paratope}. The process of a single update can be represented by Equation \eqref{SME1}-\eqref{SME5}. 
\begin{align}
    m_{ij} &= \phi_m\left(S_i^{(l)}, S_j^{(l)}, \text{msg}(X_i^{(l)}, X_{s_j}^{(l)})\right), \label{SME1} \\
    \text{msg}(X_i^{(l)}, X_{s_j}^{(l)})) &= \frac{\phi_v\left(A_i^T D(X_i^{(l)}, X_{s_j}^{(l)})\right)}{\Vert D(X_i^{(l)}, X_{s_j}^{(l)})\Vert_F + \epsilon}, \label{SME2} \\
    X_{ij} &= \sum_{k=1}^M \left(X_i^{(l)} - \mathbf{1}X_{s_j}^{(l)}[k,:]\right), \label{SM3} \\
    S_i^{(l+1)} &= \phi_s(S_i^{(l)}, \sum_{j \in \mathcal{N}_i} m_{ij}), \label{SME4} \\
    X^{(l+1)}_i &= X^{(l)}_i + \frac{1}{|\mathcal{N}_i|} \sum_{j \in \mathcal{N}_i} X_{ij} \phi_x(m_{ij}) \label{SME5}.
\end{align} 
Where $S_i,X_i$ are the sequence and coordinate representation of the i-th residue node, respectively, $X_{s_j}\in \mathbb{R}^{M\times3}$ represents the coordinates of surface vertices belonging to the j-th residue. $\mathcal{N}_i$ represents the neighbor of residue i,  $\phi_m, \phi_v, \phi_x$ are different MLP layers, and $\textbf{1}$ is a vector of all ones. $D(\cdot, \cdot)$ is the calculation of the pair-wise Euclidean distance of two matrices. For the learnable attribute matrix $A_i\in \mathbb{R}^{c_i\times d}$, the vector of each row is learned from the type and position of the corresponding i-th atom. We discuss and prove the E(3)-equivariance of SME in Appendix \ref{E3}.

The SME module is integrated into the one-step velocity field network $v_{\theta}$, enabling it to jointly predict the refined paratope endpoint and propagate antigen-specific messages across the antibody structure. To construct $v_{\theta}$, we combine SME with the Atom Multi-channel Encoder (AME) from \citep{kong2023end}, forming a hybrid message-passing backbone. While SME focuses on fine-grained surface-level transmission of antigen-epitope information to the paratope, AME handles global structural propagation. This design enables one-step paratope generation while retaining full-atom generation capabilities. The joint sequence and structural endpoint prediction produced by $v_{\theta}$ is defined in Equation~\eqref{eq:one_step_prediction}.

\subsection{Loss and Training}
\label{sec:train}

\textbf{Loss.} The model is trained using the sequence loss $\mathcal{L}_{\mathrm{seq}}$, structure loss $\mathcal{L}_{\mathrm{struct}}$, and docking loss $\mathcal{L}_{\mathrm{dock}}$:
\begin{equation}
\mathcal{L}
=
\mathcal{L}_{\mathrm{seq}}
+
\mathcal{L}_{\mathrm{struct}}
+
\mathcal{L}_{\mathrm{dock}}.
\label{LOSS}
\end{equation}
The docking loss consists of the interface endpoint loss $\mathcal{L}_{\mathrm{interface}}$ and the antigen-paratope edge-distance loss $\mathcal{L}_{\mathrm{edge}}$:
\begin{equation}
\mathcal{L}_{\mathrm{dock}}
=
\mathcal{L}_{\mathrm{interface}}
+
\mathcal{L}_{\mathrm{edge}}.
\label{eq:dock_loss}
\end{equation}
The interface loss directly supervises the predicted paratope endpoint:
\begin{equation}
\mathcal{L}_{\mathrm{interface}}
=
\frac{1}{|\mathcal{V}_{P}|}
\sum_{i\in\mathcal{V}_{P}}
\ell_{\mathrm{huber}}
\left(
\hat{X}_{1,i},X_{1,i}
\right),
\label{eq:interface_loss}
\end{equation}
where $\hat{X}_{1,i}$ and $X_{1,i}$ denote the predicted and ground-truth coordinates of the $i$-th paratope residue, respectively. The detailed definitions of $\mathcal{L}_{\mathrm{seq}}$, $\mathcal{L}_{\mathrm{struct}}$, and $\mathcal{L}_{\mathrm{edge}}$ are provided in Appendix~\ref{l&g}.

Under the one-step flow-matching formulation introduced in Section~\ref{sec:flow}, the predicted and target structural displacements from the same local initialization $X_0$ are
\begin{equation}
d_{\theta}
=
\hat{X}_1-X_0,
\qquad
d^{*}
=
X_1-X_0.
\label{eq:one_step_displacements}
\end{equation}
The corresponding theoretical flow loss at $t=0$ can be written as
\begin{equation}
\mathcal{L}_{F}
=
\lambda_F
\frac{1}{|\mathcal{V}_{P}|}
\sum_{i\in\mathcal{V}_{P}}
\ell_{\mathrm{huber}}
\left(
d_{\theta,i},d^{*}_{i}
\right).
\label{eq:one_step_flow_loss}
\end{equation}
Since both displacements share the same initialization, we have
\begin{align}
\mathcal{L}_{F}
&=
\lambda_F
\frac{1}{|\mathcal{V}_{P}|}
\sum_{i\in\mathcal{V}_{P}}
\ell_{\mathrm{huber}}
\left(
\hat{X}_{1,i}-X_{0,i},
X_{1,i}-X_{0,i}
\right)
\nonumber\\
&=
\lambda_F
\frac{1}{|\mathcal{V}_{P}|}
\sum_{i\in\mathcal{V}_{P}}
\ell_{\mathrm{huber}}
\left(
\hat{X}_{1,i},X_{1,i}
\right)
=
\lambda_F\mathcal{L}_{\mathrm{interface}}.
\label{eq:flow_interface_equivalence}
\end{align}
Therefore, under the adopted $t=0$ one-step parameterization, $\mathcal{L}_{F}$ and $\mathcal{L}_{\mathrm{interface}}$ provide equivalent supervision up to the scalar weight $\lambda_F$. Since $\mathcal{L}_{\mathrm{interface}}$ is already included in $\mathcal{L}_{\mathrm{dock}}$, adding $\mathcal{L}_{F}$ separately would only reweight the same endpoint residual without introducing an additional training signal. Consistent with this analysis, explicitly adding the separate flow-loss term did not improve model performance.

Algorithm~\ref{alg1} summarizes the training procedure. Given the antigen-specific conditional information $C$ and a target antibody $(S_1,X_1)$, the model samples an initial local interface state at $t=0$, predicts its sequence and structural endpoints, and optimizes the practical objective in Equation~\eqref{LOSS}.

\begin{algorithm}[t]
\caption{Training Algorithm of AbFlow}
\label{alg1}
\begin{algorithmic}[1]
\State \textbf{Input:} conditional information $C$ and target antibody $(S_1,X_1)$
\State \textbf{Initialize:} one-step velocity field network $v_{\theta}$
\While{$\theta$ has not converged}
\State Sample an initial local interface state $Z_0=(S_0,X_0)\sim p_0(Z\mid C)$ at $t=0$
\State $(\hat{S}_1,\hat{X}_1)\leftarrow v_{\theta}(Z_0,C)$
\State Compute $\mathcal{L}_{\mathrm{seq}}$, $\mathcal{L}_{\mathrm{struct}}$, and $\mathcal{L}_{\mathrm{dock}}$
\State $\mathcal{L}\leftarrow\mathcal{L}_{\mathrm{seq}}+\mathcal{L}_{\mathrm{struct}}+\mathcal{L}_{\mathrm{dock}}$
\State Update $\theta$ using $\nabla_{\theta}\mathcal{L}$
\EndWhile
\State \Return $v_{\theta}$
\end{algorithmic}
\end{algorithm}

\textbf{Sampling.} Starting from an initialized local interface state $Z_0=(S_0,X_0)\sim p_0(Z\mid C)$ at $t=0$, AbFlow performs one-step endpoint generation. The structural branch directly predicts the local structural endpoint $\hat{X}_1$, while the sequence branch produces residue-type logits that are converted into a categorical distribution $\hat{q}_1$. The residue types are decoded from this distribution to obtain the sequence endpoint $\hat{S}_1$. The complete sampling and generation procedure is presented in Appendix~\ref{sec:generation}.

During training, the one-step velocity field network learns the transport from the conditionally initialized paratope state to its target endpoint. During sampling, the network performs one-step endpoint generation from a local interface state initialized at $t=0$. The resulting paratope information is propagated across the antibody through the global message-passing component of $v_{\theta}$. 
\section{Experiments}
\label{experiments}
\begin{table*}[ht]
    \centering
    \caption{Comparison of Different Models on the Paratope-centric Antibody Design Task. AbFlow results include standard deviations and their corresponding $2\sigma$ confidence coverage to reflect statistical stability.}
    \label{exp1}
    \begin{tabular}{c|ccc|ccc}
        \toprule
        \multirow{2}{*}{Model} & \multicolumn{3}{c}{Overall} & \multicolumn{3}{c}{Interface} \\
        \cline{2 - 4} \cline{5 - 7} 
        & AAR$\uparrow$ & TMscore$\uparrow$ & IDDT$\uparrow$ & CAAR$\uparrow$ & RMSD$\downarrow$ & DockQ$\uparrow$ \\
        \midrule
        RosettaAb \citep{adolf2018rosettaantibodydesign} & 0.3231 & 0.9717 & 0.8272 & 0.1458 & 17.70 & 0.137 \\
        DiffAb \citep{luo2022antigen} & 0.3531 & 0.9695 & 0.8281 & 0.2217 & 23.24 & 0.158 \\
        MEAN \citep{kong2022conditional} & 0.3738 & 0.9688 & 0.8252 & 0.2411 & 17.30 & 0.162 \\
        HERN \citep{jin2022antibody} & 0.3265 & - & - & 0.1927 & 9.15 & 0.294 \\
        dyMEAN \citep{kong2023end} & \textbf{0.4365} & \underline{0.9726} & \underline{0.8454} & \underline{0.2811} & \textbf{8.11} & \underline{0.409} \\
        dyAb \citep{tan2025dyab} & 0.3789 & 0.9264 & 0.6957 & 0.2614 & 9.86 & 0.342 \\
        \hline
        \textbf{AbFlow} & \underline{0.4234}{\footnotesize$\pm$0.1334 (96.7\%)} & \textbf{0.9736}{\footnotesize$\pm$0.009 (93.3\%)} & \textbf{0.8522}{\footnotesize$\pm$0.0216 (95\%)} & \textbf{0.2824}{\footnotesize$\pm$0.2037 (96.7\%)} & \underline{8.25}{\footnotesize$\pm$4.55 (98.3\%)} & \textbf{0.423}{\footnotesize$\pm$0.161 (93.3\%)} \\
        \textbf{AbFlow-G} & 0.4204 & 0.9736 & 0.8522 & 0.279 & 8.28 & 0.422 \\
        \bottomrule
    \end{tabular}
\end{table*}
To validate the effectiveness of our proposed AbFlow, we conducted experiments on 4 critical tasks related to antibody generation: 1) paratope-centric antibody design (\ref{sec:exp1}), multi-CDRs and full-atom antibody design (\ref{sec:exp2}), binding affinity optimization (\ref{sec:exp3}), and complex structure prediction (\ref{sec:exp4}). We also provide details of our experimental setups (in \ref{sec:baslines and metrics}) and ablation studies on key modules (in Section \ref{abl}) to verify their contributions.
\subsection{Experiment Setup}
\label{sec:baslines and metrics}
\textbf{Datasets}. The experiments are trained on the Structural Antibody Database (SAbDab) \citep{dunbar2014sabdab}. For the full-atom antibody design task, the RAbD benchmark \citep{adolf2018rosettaantibodydesign} composed of 60 diverse complexes selected by domain experts is used for test. For the affinity optimization task, antibodies from SKEMPI V2.0 \citep{jankauskaite2019skempi} are used for evaluation. We confirm that there are no duplicate samples between the training set and the test set to avoid data leakage. More details about the dataset are presented in Appendix \ref{data}.

\textbf{Baselines}. For end-to-end full-atom antibody design methods, we use three representative baselines spanning different generative paradigms: dyMEAN \citep{kong2023end} (GNN-based), IgGM \citep{wang2025iggm} (diffusion-based), and dyAb \citep{tan2025dyab} (flow-based). For other models used in CDR-H3 generation, including RosettaAb \citep{adolf2018rosettaantibodydesign}, MEAN \citep{kong2022conditional}, DiffAb \citep{luo2022antigen}, and HERN \citep{jin2022antibody}, we standardized the remaining steps of the step-by-step pipeline to ensure a fair comparison with our AbFlow. Specifically, we use IgFold \citep{ruffolo2023fast} for antibody structure prediction, HDock \citep{yan2020hdock} for antigen-antibody docking, and Rosetta \citep{alford2017rosetta} is used for side-chain packing. Here we use IgFold instead of AlphaFold \citep{jumper2021highly} due to its specialization in antibody structure prediction.

\textbf{Metrics}. For the full-atom antibody design task, we employed metrics evaluating sequence recovery (AAR, CAAR \citep{ramaraj2012antigen}) and structural similarity (TM-score \citep{zhang2007scoring, xu2010significant}, lDDT \citep{mariani2013lddt}, RMSD \citep{kabsch1976solution}, DockQ \citep{basu2016dockq, mirabello2024dockq}) to assess the quality of generated antibody structures in terms of both structural fidelity and antigen-antibody docking interface geometry. For the binding affinity optimization task, we utilized binding free energy change ($\Delta\Delta G$) \citep{shan2022deep} and IMP to quantify optimization efficacy and success rates. Detailed definitions of these evaluation metrics are provided in Section \ref{sec:exp2} and Appendix \ref{metrics}.
\subsection{Paratope-centric Antibody Design}
\label{sec:exp1}

This task is the main experiment outlined in Section \ref{sec:pro setting}, which involves the co-design of the 1D CDR-H3 sequence and the 3D coordinates of the full-atom antibody. We compare our AbFlow model with several baseline methods. As shown in Table \ref{exp1}, AbFlow achieves the best performance across most evaluation metrics. This includes both the entire antibody structure (TMscore, lDDT) and the CDR-H3 sequence along with its interface with the epitope (CAAR, DockQ). Notably, the RMSD reported here corresponds to the entire antigen-antibody complex after structural alignment of the CDR-H3 region. This alignment is performed to focus on the quality of the entire antibody structure, with particular attention to the CDR-H3 region, which plays a central role in antigen binding. Additionally, we also evaluated AbFlow on the updated DockQ v2 \citep{mirabello2024dockq} across the entire test set, where AbFlow scored 0.339 compared to dyMEAN's 0.332, further demonstrating AbFlow's superior performance.

To assess the stability of our model, we also provide results from multiple sampling called AbFlow-G(Generator), where 30 samples were generated for each test target. The results demonstrate high consistency across repeated trials, highlighting the robustness of the model, which is enabled by the direct one-step flow mechanism that supports stable endpoint generation and efficient sampling.

Figure \ref{fig3} presents a case study from our full-atom antibody design task. In this case, the SME module enhances the interaction between the epitope and the paratope. Compared to dyMEAN, AbFlow better avoids steric clashes in the generated structure, which is crucial for improving the rationality and stability of the antibody design. Reducing steric clashes—defined as the undesirable overlap between atoms—leads to more physically plausible structures and improves the quality of the antigen-antibody interface, which better reflects the true nature of their interaction.

\begin{figure}[ht]
\begin{center}
\centerline{\includegraphics[width=1.0\columnwidth]{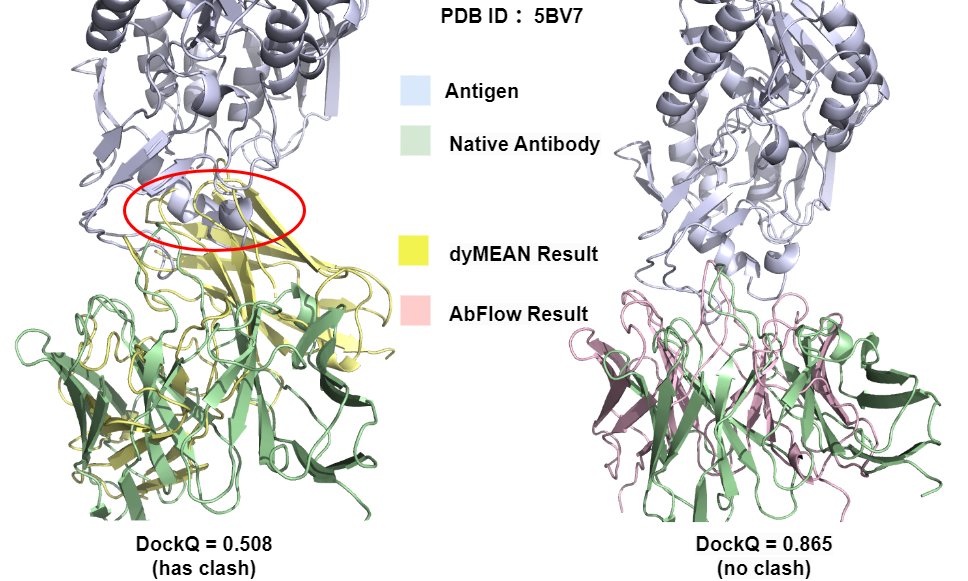}}
\caption{\textbf{Case Study of the Full-atom Antibody Design Task.} The structures on the left and right show antibodies generated by dyMEAN and AbFlow, respectively. The antibody designed by dyMEAN (yellow) exhibits a distinct steric clash (red circle) with the epitope, while AbFlow (pink) effectively avoids this clash, generating a structure highly similar to the true antibody (green). }
\Description{Case Study of the Full-atom Antibody Design Task.}
\label{fig3}
\end{center}
\end{figure}

\subsection{Multi-CDRs and Full-atom Antibody Design}
\label{sec:exp2}
To evaluate the scalability of AbFlow beyond CDR-H3 generation, we further conduct multi-CDR design experiments on the RAbD test set, covering all six CDRs (H1–H3, L1–L3). As shown in Table \ref{exp4}, AbFlow consistently achieves the lowest RMSD across all CDRs, demonstrating its ability to model cross-CDR structural dependencies and maintain global structural coherence even when multiple loops are generated simultaneously. Our AbFlow outperforms DiffAb \citep{luo2022antigen}, dyMEAN \citep{kong2023end}, and AbX \citep{zhu2024antibody} by a clear margin on both heavy-chain and light-chain loops. These results confirm that the paratope-restricted one-step flow matching and EGNN-based refinement in AbFlow generalize effectively to multi-region antibody design.

Beyond CDR-level evaluation, we further assess AbFlow on full-atom antibody structure prediction, where only dyMEAN provides comparable baselines among existing end-to-end methods. As presented in Table~\ref{exp4}, AbFlow achieves superior performance on all global structural metrics, including RMSD, TMscore, lDDT, and DockQ, highlighting its advantage in generating accurate and coherent full-antibody structures.

\begin{table}[ht]
\centering
\caption{Multi-CDRs and full-atom design results.}
\label{exp4}
\begin{tabular}{ccccc}
\hline
\rowcolor{gray!20}
\multicolumn{5}{c}{\textbf{Multi-CDRs Results}} \\
\hline
Method & CDR (H) & RMSD $\downarrow$ & CDR (L) & RMSD $\downarrow$ \\
\hline
DiffAb & \multirow{4}{*}{H1} & 0.88 & \multirow{4}{*}{L1} & 0.85 \\
dyMEAN &                     & 1.09 &                     & 1.03 \\
AbX    &                     & 0.85 &                     & 0.78 \\
\textbf{AbFlow} &            & \textbf{0.63} & & \textbf{0.64} \\
\hline
DiffAb & \multirow{4}{*}{H2} & 0.78 & \multirow{4}{*}{L2} & 0.55 \\
dyMEAN &                     & 1.11 &                     & 0.66 \\
AbX    &                     & 0.76 &                     & 0.45 \\
\textbf{AbFlow} &            & \textbf{0.55} & & \textbf{0.25} \\
\hline
DiffAb & \multirow{4}{*}{H3} & 2.86 & \multirow{4}{*}{L3} & 1.39 \\
dyMEAN &                     & 3.88 &                     & 1.44 \\
AbX    &                     & 2.50 &                     & 1.18 \\
\textbf{AbFlow} &            & \textbf{1.83} & & \textbf{0.65} \\
\hline
\rowcolor{gray!20}
\multicolumn{5}{c}{\textbf{Full-atom Results}} \\
\hline
 & RMSD $\downarrow$ & TMscore $\uparrow$ & lDDT $\uparrow$ & DockQ $\uparrow$\\
\hline
dyMEAN & 1.357 & 0.9653 & 0.8028 & 0.396 \\
\textbf{AbFlow} & \textbf{1.102} & \textbf{0.9707} &
\textbf{0.8151} & \textbf{0.431} \\
\hline
\end{tabular}
\end{table}

\subsection{Binding Affinity Optimization}
\label{sec:exp3}
Antibody binding affinity optimization is essential for enhancing the interaction between antibodies and their target antigens, improving specificity and efficacy while minimizing off-target effects. To achieve this, it is crucial to optimize the binding affinity by altering as few residues as possible. In this task, we use $\Delta\Delta G$ as a key metric to evaluate the optimization effect, which represents the change in binding affinity, and $\Delta L$, which tracks the number of residues altered after optimization. These $\Delta\Delta G$ values are predicted using a GNN-based model, which is further refined by training a multi-layer perceptron (MLP) to improve accuracy. Additionally, we introduce the IMP (IMprovement Percentage), which quantifies the proportion of improvement in binding affinity achieved after optimization.

\begin{table}[ht]
    \centering
    \caption{Affinity Optimization Comparison of Different Models.}
    \label{exp2}
    \begin{tabular}{c|ccc}
    \hline
    & \thead{Best $\Delta\Delta G\downarrow$} & \thead{$\Delta L\downarrow$} & $IMP(\%) \uparrow$\\
    \hline
    DiffAb \citep{luo2022antigen} & -2.17 & 7.06 & - \\
    MEAN \citep{kong2022conditional} & -6.48 & 8.96 & - \\
    dyMEAN \citep{kong2023end} & \underline{-7.7} & \textbf{4.92} & 36.7\% \\
    \hline
    \textbf{AbFlow} & \textbf{-16.46} & \underline{6.79} & \textbf{38.4\%}\\
    \hline
    \end{tabular}
\end{table}

Table \ref{exp2} presents a comparison of our AbFlow model with other methods, including DiffAb \citep{luo2022antigen}, MEAN \citep{kong2022conditional}, and dyMEAN \citep{kong2023end}, across three key metrics: best $\Delta\Delta G$, $\Delta L$, and IMP. The results show that AbFlow outperforms the other methods in terms of both reducing the changes in binding affinity and minimizing the number of altered residues. Furthermore, AbFlow achieves a higher IMP, indicating a greater proportion of improvement in binding affinity. These results highlight that AbFlow is not only effective in optimizing the binding affinity but also makes acceptable structural changes to the antibody. 

\begin{figure}[ht]
\begin{center}
\centerline{\includegraphics[width=0.9\columnwidth]{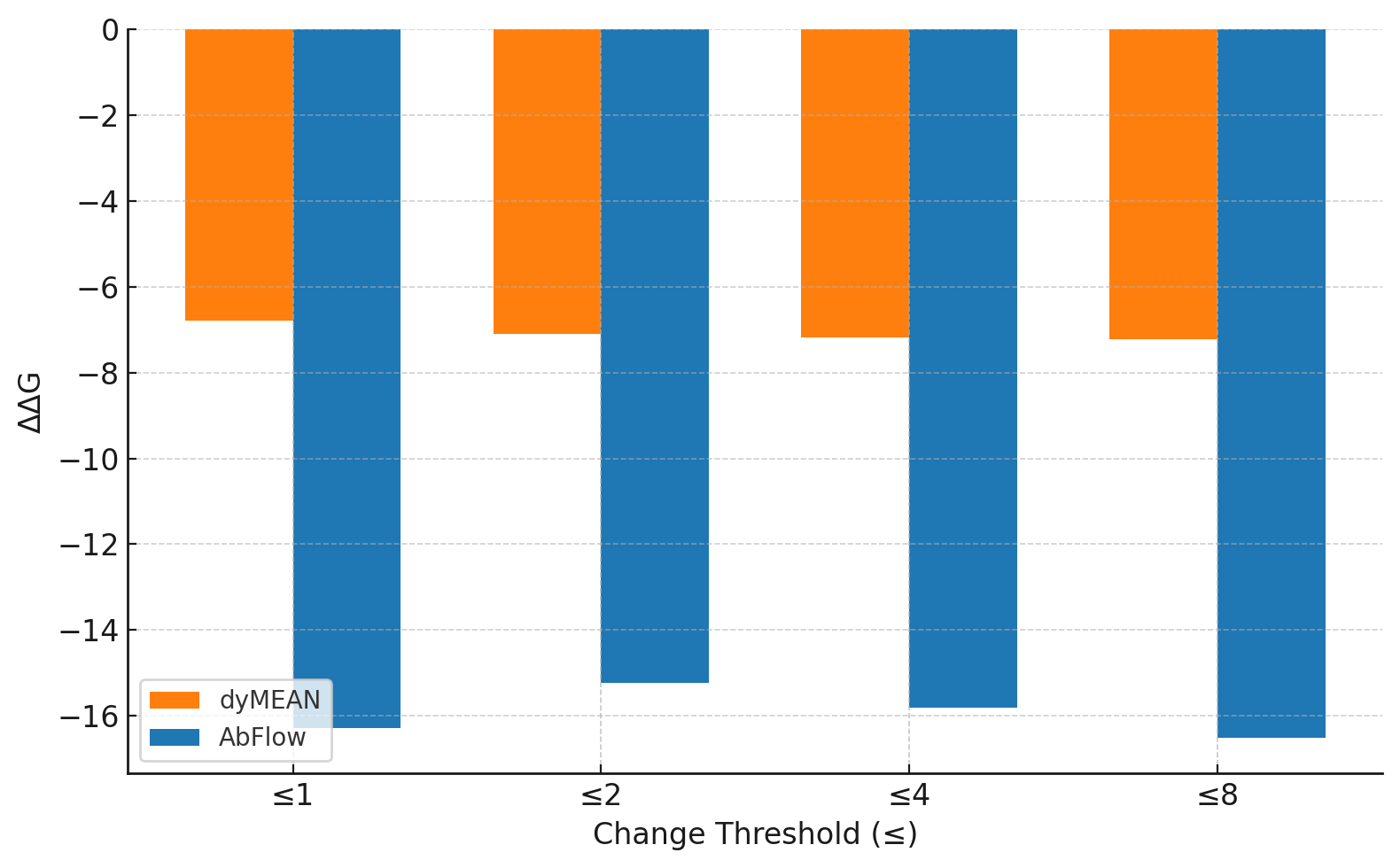}}
\caption{\textbf{$\Delta\Delta G$ Comparison at Different Length of Changed Residues.}}
\Description{Different Length of Changed Residues.}
\label{fig4}
\end{center}
\end{figure}
Figure \ref{fig4} presents a comparison of affinity optimization results when the number of changed residues is constrained. Specifically, the optimization was performed under various thresholds, limiting the maximum number of residues that could be altered to 1, 2, 4, and 8. The figure compares the performance of AbFlow with dyMEAN across these different thresholds. It shows that AbFlow maintains stable optimization performance, consistently achieving favorable $\Delta\Delta G$ values, even when the number of residues altered is limited.

\subsection{Complex Structure Prediction}
\label{sec:exp4}

We also evaluated the performance of AbFlow on the complex structure prediction task, specifically in predicting the antigen-antibody complex structure given the complete antibody sequences. The results were compared to those of HDock \citep{yan2020hdock}, HERN \citep{jin2022antibody}, dyMEAN \citep{kong2023end}, and AlphaFold 3 \citep{abramson2024accurate}. Notably, both HDock and HERN are docking methods, requiring first the prediction of the antibody structure using IgFold \citep{ruffolo2023fast}, followed by docking the antigen and antibody using their respective methods, and then assembling the side-chain with Rosetta. In contrast, AlphaFold 3 directly predicts the complex structure when provided with the sequences of both the antigen and the antibody.

\begin{table}[ht]
    \centering
    \caption{Antigen-antibody Structure Prediction Comparison of Different Models.}
    \label{exp3}
    \begin{tabular}{c|cccc}
    \hline
    Model & TMscore$\uparrow$ & lDDT$\uparrow$ & RMSD $\downarrow$ & DockQ$\uparrow$ \\
    \hline
    HDock \citep{yan2020hdock} & 0.9701 & 0.8439 & 16.32 & 0.202 \\
    HERN \citep{jin2022antibody} & 0.9702 & 0.8441 & 9.63 & 0.429 \\
    dyMEAN \citep{kong2023end} & \textbf{0.9731} & \textbf{0.8673} & \underline{9.05} & \underline{0.452} \\
    \hline
    AlphaFold 3 \citep{abramson2024accurate}& \textbf{0.9799} & \textbf{0.921} & 9.47 & \textbf{0.475} \\
    \textbf{AbFlow} & \underline{0.9714} & \underline{0.8618} & \textbf{9.01} & \textbf{0.455} \\
    \hline
    \end{tabular}
\end{table}

Interestingly, our tests also reveal that AbFlow performs better on this task using its dedicated structure-prediction inference path, where the joint network directly predicts the complex structure. This task-specific configuration improves the performance of end-to-end protein structure prediction.

We evaluated the methods using TM-score, lDDT, and DockQ, as shown in Table \ref{exp3}. Our results indicate that AbFlow outperforms all methods except AlphaFold3 in terms of docking metrics (RMSD DockQ), which reflects the quality of interaction prediction. While AlphaFold3 excels at predicting monomer structures, it leaves room for improvement in protein complex predictions. The gap between our method and AlphaFold3, in this regard, is not insurmountable, providing a clear direction for future development.

\section{Ablation Study}
\label{abl}
In this section, we evaluate the effects of the proposed framework and modules in AbFlow.

\begin{table*}[t]
    \centering
    \caption{Ablation Study Results on the Full-atom Antibody Design}
    \label{abl1}
    \begin{tabular}{c|cccccc}
    \hline
     & AAR$\uparrow$ & TMscore$\uparrow$ & lDDT$\uparrow$ & CAAR$\uparrow$ & RMSD$\downarrow$ & DockQ$\uparrow$ \\
    \hline
    \textbf{AbFlow} & \textbf{0.4234} & \textbf{0.9736} & \textbf{0.8522} & \textbf{0.2824} & 8.25 & \textbf{0.423} \\
    w/o One-step Flow & 0.4000 & 0.9721 & 0.8448 & 0.2640 & \textbf{7.76} & 0.417 \\
    w/o SME & 0.4046 & 0.9732 & 0.8506 & 0.2602 & 8.12 & 0.335 \\
    Standard FM & 0.2789 & 0.9649 & 0.8256 & 0.1382 & 17.11 & 0.243 \\
    \hline
    \end{tabular}
\end{table*}

Table~\ref{abl1} presents the ablation results for the full-atom antibody design task. The results demonstrate that both the one-step flow formulation and SME improve full-atom antibody generation. The \textbf{w/o One-step Flow} setting removes the one-step flow formulation and reduces the model to a direct end-to-end predictor, with the joint network serving as the backbone. This comparison indicates that the one-step flow formulation improves the overall accuracy of sequence-structure co-design. The \textbf{w/o SME} setting shows that SME notably improves interface quality. The \textbf{Standard FM} setting replaces the one-step formulation with a conventional multi-step flow-matching process. The performance degradation suggests that the proposed one-step transport is better suited to paratope-centric sequence-structure co-design than iterative multi-step flow integration.


\begin{table}[ht]
    \centering
    \caption{Ablation Study Results on the Affinity Optimization}
    \label{abl2}
    \begin{tabular}{c|ccc}
    \hline
    & \thead{Best $\Delta\Delta G\downarrow$} & \thead{$\Delta L \downarrow$} & \thead{$IMP\% \uparrow$}\\
    \hline
    \textbf{AbFlow} & \textbf{-16.46} & 6.79 & 38.4\% \\
    w\verb|/|o One-step Flow & -5.41 & \textbf{2.59}  & \underline{40\%} \\
    w\verb|/|o SME & \underline{-5.71} & \underline{4.41} & \textbf{40.8\%} \\
    dyMEAN & -7.7 & 4.92  & 36.7\% \\
    \hline
    \end{tabular}
\end{table}

Table \ref{abl2} presents the ablation study results of AbFlow on the binding affinity optimization task. The results show that the complete AbFlow model significantly improves the affinity optimization performance. Even though it tends to change more residues, the resulting improvement in $\Delta\Delta G$ is substantial. We believe that the optimization gains achieved at the cost of $\Delta L$ are acceptable.

\section{Computational Efficiency}
To assess the computational efficiency of AbFlow, we compare its empirical runtime against predictive end-to-end methods and step-by-step generative baselines.

Compared with dyMEAN, AbFlow performs one-step local transport together with surface-enhanced global message passing. This introduces additional computation associated with the sampled surface vertices $M_0$ and the joint processing of the local interface and full antibody. Despite this increase in computational cost, as shown in Table~\ref{tab:efficiency}, AbFlow achieves substantially stronger structural performance, representing a favorable trade-off relative to dyMEAN's simpler predictive formulation.

Compared with step-by-step methods such as DiffAb, AbFlow demonstrates advantages in both accuracy and efficiency. DiffAb requires $13\,\mathrm{s}$ for CDR-H3 diffusion and $9.89\,\mathrm{s}$ for template generation, resulting in $22.89\,\mathrm{s}$ in total, whereas AbFlow completes full-atom generation in $14\,\mathrm{s}$ while producing significantly better structures. These results demonstrate that coupling paratope-restricted one-step transport with surface-enhanced global message passing provides an efficient and effective framework for full-atom antibody design.

\begin{table}[ht]
\centering
\caption{Comparison of training and inference efficiency per sample.}
\label{tab:efficiency}
\begin{tabular}{lcc}
\hline
\textbf{Model} & \textbf{Train (s)} & \textbf{Inference (s)} \\
\hline
AbFlow & 4.56 & 14.00 \\
dyMEAN & 2.21 & 0.47 \\
DiffAb & -- & 22.89 \\
\hline
\end{tabular}
\end{table}

\section{Future Work}
\label{sec:lim}
While our method demonstrates promising results, several limitations remain to be addressed. The surface information aggregation mechanism, though effective for capturing geometric and physicochemical features, introduces computational overhead that prolongs inference time and may limit real-time or high-throughput applications. A promising direction for future work is to develop adaptive surface sampling strategies—dynamically selecting an appropriate number of surface vertices for each epitope based on its geometry—to further reduce computational cost while preserving the fidelity of geometric information.

On the application side, our problem setting assumes CDR-dominated binding, which limits applicability to cases where framework–epitope interactions are essential. Extending AbFlow to handle such framework-dominated regimes represents an important direction for future work. In addition, although our study focuses on antibodies, extending the framework to broader protein complexes (e.g., protein–ligand or peptide binders) remains an open challenge. Such generalization would further expand the applicability of our approach to wider protein engineering tasks.
\section{Conclusion}
\label{con}
In this study, we introduced AbFlow, an end-to-end one-step flow-matching framework designed to optimize the full-atom antibody design process. By leveraging paratope-centric distribution transport and enhancing the velocity field network with an equivariant Surface Multi-channel Encoder (SME), AbFlow offers efficient full-atom antibody design. Our experiments demonstrate that AbFlow significantly outperforms existing generative and predictive methods in generating structurally accurate antibodies with improved binding affinity. Furthermore, AbFlow is poised to accelerate the development of therapeutic antibodies, especially in scenarios requiring rapid responses to emerging diseases or the creation of personalized immunotherapies.
\begin{acks}
This research was supported in part by National Natural Science Foundation of China (No. 92470128, No. U2241212, No.62376276), by Beijing Nova Program (No. 20230484278). We also wish to acknowledge the support provided by the fund for building world-class universities (disciplines) of Renmin University of China, by Intelligent Social Governance Interdisciplinary Platform, Major Innovation \& Planning Interdisciplinary Platform for the “Double-First Class” Initiative, Public Policy and Decision-making Research Lab, and Public Computing Cloud, Renmin University of China.
\end{acks}
\bibliographystyle{ACM-Reference-Format}
\bibliography{reference}

\appendix
\section*{Appendix}

\section{E(3)-Equivariance of the velocity field network}
\label{E3}
In this section, we prove the E(3)-equivariance of our proposed velocity field network. Note that \citep{kong2023end} has already demonstrated the E(3)-equivariance of the AME module; thus, it suffices to prove the equivariance of the SME module through the following theorem.

\begin{theorem}
Let $\{X_i^{(l)}\}_{i \in \mathcal{V}_A\cup\mathcal{V}_E}$ denote the coordinates of residue nodes and $\{X_{s_j}^{(l)}\}_{j \in \mathcal{N}_i}$ the coordinates of interacting surface points. Let $\{S_i^{(l)}\}_{i \in \mathcal{V}_A\cup\mathcal{V}_E}$ denote the hidden states of residue sequence, respectively. Then, under any Euclidean transformation $g = (Q, t) \in \mathrm{E}(3)$, the Surface Multi-channel Encoder is E(3)-invariant with respect to sequence inputs and E(3)-equivariant with respect to coordinate inputs. Express this in formulation: 
\begin{equation}
    S_i^{(l)}\mapsto S_i^{(l)}
\quad\Longrightarrow\quad
S_i^{(l+1)}\mapsto S_i^{(l+1)}.
\end{equation}
\begin{equation}
    X_i^{(l)} \mapsto QX_i^{(l)} + t \quad \Rightarrow \quad X_i^{(l+1)} \mapsto QX_i^{(l+1)} + t.
\end{equation}
\end{theorem}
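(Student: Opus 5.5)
The plan is to verify the two claims layer by layer through Equations~\eqref{SME1}--\eqref{SME5}. Apply $g=(Q,t)$ simultaneously to every residue coordinate matrix $X_i^{(l)}$ and every surface-vertex matrix $X_{s_j}^{(l)}$. Here coordinates are stored so that a rigid motion acts on each atom or vertex as $x\mapsto Qx+t$; in row-vector form this is $X\mapsto XQ^\top+\mathbf{1}t^\top$. We take the hidden states $S_i^{(l)}$ to be unchanged by $g$. The goal is to show that every quantity feeding into $\phi_m$, $\phi_s$ and $\phi_x$ is invariant, and that the only coordinate-valued quantity, $X_{ij}$, transforms equivariantly. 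The AME part is covered by \citep{kong2023end}, as stated before the theorem, so only SME needs checking.

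\textbf{Step 1 (invariance of the geometric message).} The pairwise distance matrix $D(X_i^{(l)},X_{s_j}^{(l)})$ has entries $\|x_m-s_k\|_2$. Under $g$ each entry becomes $\|Q(x_m-s_k)\|_2=\|x_m-s_k\|_2$, because $Q$ is orthogonal and the translations cancel. Hence $D$ is invariant, and so is its Frobenius norm. The attribute matrix $A_i$ is learned from atom types and positions. I will take ``position'' to mean the position index of the atom inside the residue (or some invariant positional descriptor), not raw coordinates, so $A_i$ is invariant. Then $A_i^\top D$, $\phi_v(A_i^\top D)$ and the normalized $\mathrm{msg}$ in Equation~\eqref{SME2} are invariant. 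Since $S_i^{(l)}$ and $S_j^{(l)}$ are unchanged, $m_{ij}$ in Equation~\eqref{SME1} is invariant.

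\textbf{Step 2 (sequence update).} The neighborhood $\mathcal{N}_i$ is built from KNN on the minimum inter-atomic distances of Equation~\eqref{kNN}, which are invariant. So the sum $\sum_{j\in\mathcal{N}_i}m_{ij}$ is invariant, and by Equation~\eqref{SME4} $S_i^{(l+1)}$ is invariant. This proves the first implication.

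\textbf{Step 3 (coordinate update).} Each summand $X_i^{(l)}-\mathbf{1}X_{s_j}^{(l)}[k,:]$ in Equation~\eqref{SM3} is a difference of points, so the translation $t$ cancels and the rotation acts linearly; thus $X_{ij}\mapsto X_{ij}Q^\top$ in row form, i.e.\ rotation only. The factor $\phi_x(m_{ij})$ is invariant, and I will read it as multiplying the atom channels on the left (or as a scalar), so it commutes with right multiplication by $Q^\top$. Therefore the averaged update in Equation~\eqref{SME5} rotates but does not translate. Adding it to $X_i^{(l)}$, which transforms as $X_i^{(l)}Q^\top+\mathbf{1}t^\top$, gives $X_i^{(l+1)}\mapsto X_i^{(l+1)}Q^\top+\mathbf{1}t^\top$. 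This is the second implication in the paper's notation $QX+t$.

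\textbf{Main obstacle.} The difficulty is not the computation but pinning down the shapes in Equations~\eqref{SM3}--\eqref{SME5}. $X_i$ is $c_i\times 3$ while the vertex block has $M$ rows, and $\phi_x(m_{ij})$ must act on the atom (channel) side and never mix the spatial axis; if it mixed the spatial axis, equivariance would fail. I would first fix these conventions explicitly. I would also note that if the shape of $\phi_v(A_i^\top D)$ depends on $c_i$ or $M$, it is still built only from invariant entries, so channel-size heterogeneity does not affect the argument. Finally, the random vertex subsampling must be independent of the coordinates, or depend only on invariant quantities; with that, the argument goes through for each sampled set.
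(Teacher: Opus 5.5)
Your proof is correct and follows the same route as the paper's. Invariance of the distance matrix $D$ gives invariance of $\mathrm{msg}$ and $m_{ij}$, hence of $S_i^{(l+1)}$ through $\phi_s$. The translation cancels in $X_{ij}$, so it only rotates, and this gives equivariance of the residual coordinate update. Your added bookkeeping (the row- versus column-vector convention, invariance of $A_i$ and $\mathcal{N}_i$, and requiring $\phi_x(m_{ij})$ to act only on the atom-channel side) makes explicit assumptions the paper's proof leaves implicit, but it does not change the argument.
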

\textbf{Proof. }First, it is evident that the surface vertices corresponding to a residue are determined by their relative positional relationships, and thus remain invariant under the transformation $g \in \mathrm{E}(3)$. Meanwhile, the coordinates of surface vertices $\{X_{s_j}^{(l)}\}_{j \in \mathcal{N}_i}$ vary synchronously with the residues coordinates $X_i$. This implies:
\begin{equation}
    X_i \mapsto QX_i + t \Rightarrow X_{s_j} \mapsto QX_{s_j} + t, \forall j\in \mathcal{N}_i.
\end{equation}

For a residue node $i$ and a neighboring surface node $s_j$, the $m_{ij}$ is computed via
\begin{equation}
    m_{ij} = \phi_m(S_i^{(l)}, S_{s_j}^{(l)}, \text{msg}(X_i^{(l)}, X_{s_j}^{(l)})).
\end{equation}

Here, the message is constructed as:
\begin{align}
    \text{msg}(X_i^{(l)}, X_{s_j}^{(l)}) 
    &= \phi_v\left( 
        \frac{A_i^\top D(X_i^{(l)}, X_{s_j}^{(l)})}
             {D(X_i^{(l)}, X_{s_j}^{(l)}) + \epsilon} 
    \right), \nonumber \\
    D[p, q] &= \|X_i^{(l)}[p,:] - X_{s_j}^{(l)}[q, :]\|_2.
\end{align}

for each element in matrix $D$ that indicates Euclidean distance between a residue and a surface vert, satisfies:
\begin{align*}
\| (Q X_i[p, :] + t) - (Q X_{s_j}[q, :] + t) \|_2 = \| Q (X_i[p, :] - X_{s_j}[q, :]) \|_2 \\
= \sqrt{[Q (X_i[p, :] - X_{s_j}[q, :])]^\top [Q (X_i[p, :] - X_{s_j}[q, :])]} \\
= \sqrt{(X_i[p, :] - X_{s_j}[q, :])^\top Q^\top Q (X_i[p, :] - X_{s_j}[q, :])} \\
= \sqrt{(X_i[p, :] - X_{s_j}[q, :])^\top (X_i[p, :] - X_{s_j}[q, :])} \\
= \| X_i[p, :] - X_{s_j}[q, :] \|_2.
\end{align*}
So we have $ \text{msg}(QX_i^{(l)} + t, QX_{s_j}^{(l)} + t) = \text{msg}(X_i^{(l)}, X_{s_j}^{(l)}) $, thus $m_{ij}$ is invariant under $g= (Q, t)$. According to the update process of $S_i$: 
\begin{equation}
    S_i^{(l+1)} = \phi_s(S_i^{(l)}, \sum_{j \in \mathcal{N}_i} m_{ij}).
\end{equation}

we can prove that our SME module remains invariant with respect to the sequence hidden states $S_i$.

Next, applying transformation $g$ to the geometric messages $X_{ij}$ involved in the aggregation process, we have:
\begin{align}
    X_i^{(l)} - \mathbf{1}X_{s_j}^{(l)}[k, :] 
    \mapsto\ & QX_i^{(l)} + t - (Q\mathbf{1}X_{s_j}^{(l)}[k, :] + t) \nonumber \\
    &= Q(X_i^{(l)} - \mathbf{1}X_{s_j}^{(l)}[k, :]), \quad k=1, \dots, M.
\end{align}

thus $X_{ij} \mapsto QX_{ij}$.

SME updates the residues coordinates by:
\begin{equation}
    X_i^{(l+1)} = X_i^{(l)} + \frac{1}{|\mathcal{N}_i|} \sum_{j \in \mathcal{N}_i} X_{ij} \cdot \phi_e(m_{ij}).
\end{equation}

As $X_i^{(l)} \mapsto QX_i^{(l)} + t$, $X_{ij} \mapsto QX_{ij}$, and $\phi_e(m_{ij})$ is invariant, we obtain:
\begin{equation}
    X_i^{(l+1)} \mapsto QX_i^{(l)} + t + \frac{1}{|\mathcal{N}_i|} \sum_j QX_{ij} \cdot \phi_e(m_{ij}) = QX_i^{(l+1)} + t.
\end{equation}

Thus far, we have rigorously proven that the Surface Multi-channel Encoder (SME) module exhibits E(3)-equivariance with respect to the input residue coordinates.

\section{Data and Model Availability}
\label{data}
All datasets and model weights are publicly available in our \href{https://huggingface.co/wenda8759/AbFlow}{Hugging Face repository}, including PDB files with train/validation/test splits and trained model weights for all experiments.

\section{Evaluation Metrics}
\label{metrics}
We use the following several metrics to compare the performance of our AbFlow with other models on full-atom antibody design task: 

\textbf{Amino Acid Recovery (AAR)}: The overlapping ratio of the generated sequence and the ground truth. 

\textbf{CAAR} \citep{ramaraj2012antigen}: Computes AAR restricted to binding residues whose minimum distance from epitope residues is below 6.6Å. 

\textbf{TMscore} \citep{zhang2007scoring, xu2010significant}: Measures the global similarity between the generated structure and the ground truth in terms of $C_{\alpha}$ coordinates. 

\textbf{Local Distance Difference Test (lDDT)} \citep{mariani2013lddt}: Contrasts the difference of the atom-wise distance between the generated structure and the ground truth to evaluate local similarity. 

\textbf{RMSD}: Calculates the Root Mean Square Deviation regarding the absolute coordinates of CDR-H3 without Kabsch \citep{kabsch1976solution, mirabello2024dockq} alignment. 

\textbf{DockQ}: \citep{basu2016dockq}: A comprehensive score for the docking quality of a complex structure.

$\mathbf{\Delta\Delta G}$: Represents the change in Gibbs free energy of the protein complex upon mutation, reflecting the effect of the mutation on protein stability or binding affinity.

For the metrics used to evaluate binding affinity optimization in Section \ref{sec:exp2}: 
\begin{align}
    best \Delta\Delta G=\frac{1}{D}\sum_{d=1}^D \mathop{\min}_{p} \Delta\Delta G_{dp}, \\
    mean \Delta\Delta G=\frac{1}{DP}\sum_{d=1}^D\sum_{p=1}^P \Delta\Delta G_{dp}.
\end{align}

\section{Local Paratope Initialization}
\label{sec:pep_condition}

For settings where an additional paratope prior is available, we employ the receptor-conditioned peptide generation method PepGLAD~\citep{kong2024full} to generate an initial paratope sequence and structure around the epitope. The generated peptide is incorporated into the paratope region of the global antibody context, while the known non-paratope regions remain unchanged. This provides an epitope-conditioned paratope prior for subsequent message passing rather than directly determining the generated interface. The local interface state used by the one-step velocity field network is initialized independently and subsequently refined toward the generated endpoint.

\section{Settings of Surface Sampling}
\label{ss}
Here, we describe how the number of surface vertices on the epitope was determined. Based on the surface cutoff of 1.5~\AA introduced in Section~\ref{sec:surf}, we computed the number of surface vertices per residue across the training set and obtained an average of approximately $\overline{|\mathbb{S}|}\approx 75$. Since each residue is assigned a fixed-size surface vertex set $M_0$, residues with fewer than $M_0$ vertices are padded with zero vectors. To ensure that most residues provide sufficient geometric information without excessive padding, we set $M_0=\frac{2}{3}\,\overline{|\mathbb{S}|}\approx 50$.
\begin{figure}[ht]
\begin{center}
\centerline{\includegraphics[width=0.8\columnwidth]{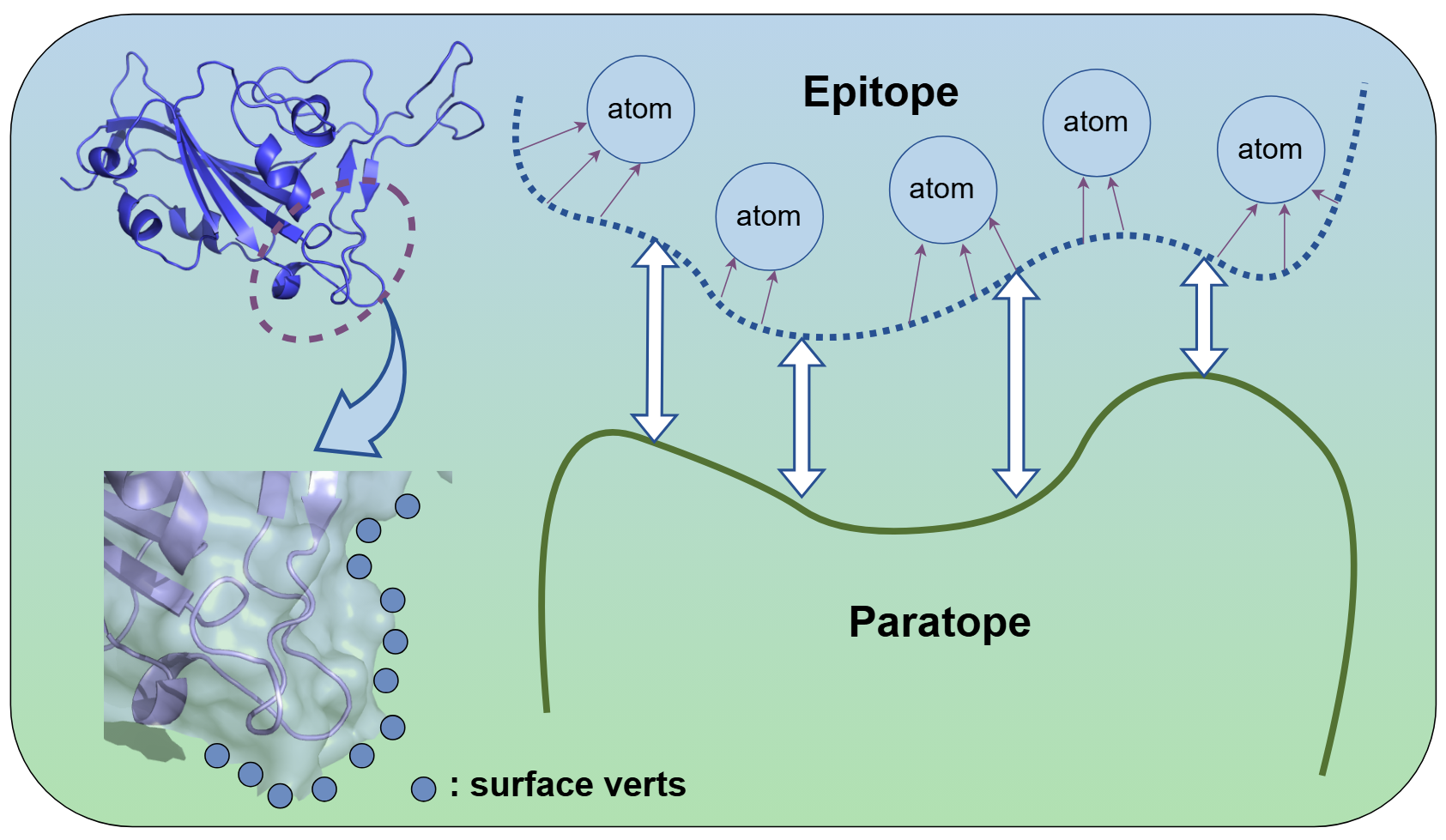}}
\caption{\textbf{Schematic diagram of surface definition and message passing.} We generate surface vertices within a cutoff of 1.5 Å around the epitope. For each vertex, we assign the nearest residue. Then, we perform message passing between the surface vertices and the antibody paratope using SME to enhance the learning ability of the interaction.}
\label{fig6}
\end{center}
\Description{Schematic diagram of surface definition and message passing.}
\end{figure}
\section{Sampling and Generation}
\label{sec:generation}

Algorithm~\ref{alg2} summarizes the one-step sampling procedure of AbFlow.

\begin{algorithm}[t]
\caption{Sampling Algorithm of AbFlow}
\label{alg2}
\begin{algorithmic}[1]
\State \textbf{Input:} condition $C$ and trained model $v_{\theta}$
\State Sample $Z_0=(S_0,X_0)\sim p_0(Z\mid C)$ at $t=0$
\State $\hat{X}_1\leftarrow v_{\theta}^{x}(Z_0,C)$
\State $\hat{q}_1\leftarrow\operatorname{Softmax}\left(v_{\theta}^{s}(Z_0,C)\right)$
\State $\hat{S}_1\leftarrow\operatorname*{arg\,max}\hat{q}_1$
\State \Return local sequence and structural endpoints $(\hat{S}_1,\hat{X}_1)$
\end{algorithmic}
\end{algorithm}

Within the same forward process, the predicted paratope information is propagated across the antibody through the global message-passing component of $v_{\theta}$ to generate the complete antibody sequence and full-atom structure. The generated antibody is then rigidly aligned with the predicted local interface before assembling the final antigen--antibody complex.

\section{Losses}
\label{l&g}

In this section, we provide detailed definitions of the sequence loss $\mathcal{L}_{\mathrm{seq}}$, structure loss $\mathcal{L}_{\mathrm{struct}}$, and antigen-paratope edge-distance loss $\mathcal{L}_{\mathrm{edge}}$ used in Equation~\eqref{LOSS}. The interface loss and the complete docking objective are defined in Section~\ref{sec:train}.

The sequence loss calculates the cross-entropy between the predicted amino acid distribution $p_i^{(r)}$ at each message-passing round and the one-hot vector $p_i^{\mathrm{true}}$ of the ground-truth residue type:
\begin{equation}
\mathcal{L}_{\mathrm{seq}}
=
\frac{1}{R|\mathcal{V}_{P}|}
\sum_{r=1}^{R}
\sum_{i\in\mathcal{V}_{P}}
\ell_{\mathrm{ce}}
\left(
p_i^{(r)},p_i^{\mathrm{true}}
\right).
\label{seq loss}
\end{equation}

The structure loss consists of the coordinate loss and the chemical-bond loss, which measure global and local structural deviations, respectively. The coordinate loss calculates the Huber loss between the final prediction $X_i^{(R)}$ and the ground-truth coordinates $X_i^{\mathrm{true}}$ after alignment. The chemical-bond loss calculates the Huber loss between the predicted and ground-truth bond lengths. Thus, $\mathcal{L}_{\mathrm{struct}}=\mathcal{L}_{\mathrm{coord}}+\mathcal{L}_{\mathrm{bond}}$:
\begin{equation}
\mathcal{L}_{\mathrm{coord}}
=
\frac{1}{|\mathcal{V}_{A}|}
\sum_{v_i\in\mathcal{V}_{A}}
\ell_{\mathrm{huber}}
\left(
X_i^{(R)},X_i^{\mathrm{true}}
\right),
\label{coord loss 1}
\end{equation}
\begin{equation}
\mathcal{L}_{\mathrm{bond}}
=
\frac{1}{|\mathcal{B}|}
\sum_{b\in\mathcal{B}}
\ell_{\mathrm{huber}}
\left(
b^{(R)},b^{\mathrm{true}}
\right).
\label{coord loss 2}
\end{equation}

The edge-distance loss supervises the predicted residue distances across the antigen-paratope interface at each message-passing round:
\begin{equation}
\mathcal{L}_{\mathrm{edge}}
=
\frac{1}{|\mathcal{E}_{EP}|}
\sum_{r=1}^{R}
\sum_{(u,v)\in\mathcal{E}_{EP}}
\ell_{\mathrm{huber}}
\left(
d^{(r)}(u,v),d^{\mathrm{true}}(u,v)
\right),
\label{dock loss 2}
\end{equation}
where $\mathcal{E}_{EP}$ denotes the interaction edges between the epitope and paratope, and $d^{(r)}(u,v)$ and $d^{\mathrm{true}}(u,v)$ denote the predicted and ground-truth distances between residues $u$ and $v$, respectively.


\section{Standard Multi-step Flow Matching}
\label{app:standard_fm}

To compare with the one-step formulation adopted in AbFlow, we further implement a standard multi-step flow-matching variant while keeping the same SME, message-passing backbone, conditional information, and auxiliary objectives.

Let $X_0$ and $X_1$ denote the initialized and ground-truth paratope coordinates, respectively. Before constructing the flow path, we align $X_0$ to $X_1$ using Kabsch alignment \citep{kabsch1976solution} followed by residue-level Hungarian matching \citep{kuhn1955hungarian} based on $C_{\alpha}$ distances, and denote the aligned coordinates as $\widetilde{X}_0$.

For each training sample, we draw $t\sim\mathcal{U}(0,1)$ and define
\begin{equation}
    X_t=\sigma_t\widetilde{X}_0+tX_1,
    \qquad
    \sigma_t=1-(1-\sigma_{\min})t,
\end{equation}
where $\sigma_{\min}=0.01$. The corresponding target coordinate velocity is
\begin{equation}
    u_x^*=X_1-(1-\sigma_{\min})\widetilde{X}_0.
\end{equation}

For the sequence state, we interpolate between the one-hot \texttt{[MASK]} state $q_0$ and the ground-truth amino-acid state $q_1$:
\begin{equation}
    S_t=(1-t)q_0+tq_1,
    \qquad
    u_s^*=q_1-q_0.
\end{equation}

The network explicitly takes $(X_t,S_t,t)$ as input and predicts the coordinate and sequence velocities $v_\theta^x$ and $v_\theta^s$. The flow-matching losses are
\begin{equation}
    \mathcal{L}_{\mathrm{FM}}
    =
    \ell_{\mathrm{Huber}}(v_\theta^x,u_x^*)
    +\ell_{\mathrm{Huber}}(v_\theta^s,u_s^*).
\end{equation}
The complete training objective is
\begin{equation}
    \mathcal{L}
    =
    \mathcal{L}_{\mathrm{seq}}
    +
    \mathcal{L}_{\mathrm{struct}}
    +
    \mathcal{L}_{\mathrm{dock}}
    +
    \mathcal{L}_{\mathrm{FM}}.
\end{equation}

At inference time, the coordinate and sequence states are initialized in the same manner as during training. We use explicit Euler integration with $N$ steps:
\begin{align}
    X_{k+1}
    &=
    X_k+\Delta t\,v_\theta^x(X_k,S_k,t_k),\\
    S_{k+1}
    &=
    S_k+\Delta t\,v_\theta^s(X_k,S_k,t_k),
\end{align}
where $\Delta t=1/N$ and $t_k=k/N$. After each sequence update, the state is clipped to non-negative values and normalized to the probability simplex.

Table~\ref{tab:standard_fm_steps} reports the effect of different numbers of Euler integration steps.

\begin{table}[h]
\centering
\caption{Effect of sampling steps for standard multi-step flow matching.}
\label{tab:standard_fm_steps}
\resizebox{\columnwidth}{!}{
\begin{tabular}{c|ccc|ccc}
\hline
\multirow{2}{*}{Steps}
& \multicolumn{3}{c|}{Overall}
& \multicolumn{3}{c}{Interface} \\
& AAR$\uparrow$
& TMscore$\uparrow$
& lDDT$\uparrow$
& CAAR$\uparrow$
& RMSD$\downarrow$
& DockQ$\uparrow$ \\
\hline
10 & 0.2789 & 0.9649 & 0.8256 & 0.1382 & 17.11 & 0.243 \\
20 & 0.2623 & 0.9647 & 0.8245 & 0.1315 & 17.09 & 0.243 \\
30 & 0.2586 & 0.9649 & 0.8243 & 0.1253 & 17.07 & 0.236 \\
50 & 0.2573 & 0.9649 & 0.8244 & 0.1202 & 17.07 & 0.241 \\
\hline
\end{tabular}}
\end{table}

Increasing the number of sampling steps does not improve performance, with the global structural metrics remaining nearly unchanged and the sequence-related metrics generally decreasing. We therefore use 10 sampling steps for the Standard FM result reported in Table~\ref{abl1}.

\section{Hyperparameter and Hardware Requirements}
\label{sec:hard}
Our AbFlow is trained using the Adam optimizer within the PyTorch data-parallel framework, implemented on 2 NVIDIA A100 GPUs, each with 80GB of VRAM. The hyperparameters of the model are detailed in Table \ref{hyp}.

\begin{table}[H]
\centering
\caption{Model hyperparameters and training configurations}
\label{hyp}
\begin{tabular}{lcl}
\toprule
\textbf{Hyperparameter} & \textbf{Value} & \textbf{Description} \\
\midrule
embed\_size       & 64  & Sequence embedding size. \\
hidden\_size      & 128 & Hidden state dimension. \\
n\_layers         & 3   & One-step velocity network layers. \\
k\_neighbors      & 9   & KNN neighbors per node. \\
batch\_size       & 16  & Training batch size. \\
$M_0$             & 50  & Surface vertices per residue. \\
\bottomrule
\end{tabular}
\end{table}
\end{document}